\newcommand{\appref}[1]{\hyperref[#1]{{Appendix~\ref*{#1}}}}
\newcommand{\be}{\begin{eqnarray} \begin{aligned}}
\newcommand{\ee}{\end{aligned} \end{eqnarray} }
\newcommand{\benn}{\begin{eqnarray*} \begin{aligned}}
\newcommand{\eenn}{\end{aligned} \end{eqnarray*}}
\newcommand{\cancel}[1]{} 
\newcommand*{\cA}{\mathcal{A}} 
\newcommand*{\cB}{\mathcal{B}}
\newcommand*{\cC}{\mathcal{C}}
\newcommand*{\cE}{\mathcal{E}}
\newcommand*{\cL}{\mathcal{L}}
\newcommand*{\cM}{\mathcal{M}}
\newcommand*{\cD}{\mathcal{D}}
\newcommand*{\cS}{\mathcal{S}}
\newcommand*{\cT}{\mathcal{T}}
\newcommand*{\tr}{\mathop{\mathrm{tr}}\nolimits}
\newcommand*{\supp}{\mathrm{supp}}
\newcommand{\bc}{\begin{center}}
\newcommand{\ec}{\end{center}}
\newcommand{\id}{\mathbb{I}}
\newtheorem{theorem}{Theorem}[]  
\newtheorem{lemma}{Lemma}[]
\newtheorem{definition}{Definition}
\newtheorem{corollary}{Corollary}[theorem]
\def\id{\mathbb{I}}
\def\01{\{0,1\}}
\newcommand{\ket}[1]{|#1\rangle}
\newcommand{\proj}[1]{|#1\rangle\langle#1|}
\newcommand{\comment}[1]{}
\renewcommand*{\kappa}{k}
\begin{document}
\newcommand*{\spec}{\mathsf{spec}}
\newcommand*{\sspec}{\mathsf{Sspec}}
\newcommand*{\photonnumber}{{\bf N}}
\newcommand*{\Mat}{\mathsf{Mat}}
\newcommand*{\poi}{\mathsf{Poi}}
\newcommand*{\bin}{\mathsf{Bin}}

\newcommand*{\bT}{\overline{\mathcal{T}}}
\renewcommand*{\id}{\mathsf{id}}
\renewcommand*{\P}{{\bf P}}
\newcommand*{\hit}{{\bf t}}
\newcommand*{\cCun}{\mathcal{C}_*}
\newcommand*{\logicalL}{\overline{\mathsf{L}}}

\newcommand*{\Smatrix}[2]{\raisebox{-1.7ex}{
\begin{picture}(39,28)(-2,-4)
\put(0,0){\includegraphics[scale=.5]{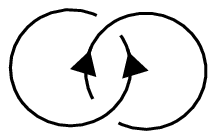}}
\put(-4,5){\small $#1$}
\put(30,5){\small $#2$}
\end{picture}}}

\title{Generating topological order: no speedup by dissipation}
\author{Robert \surname{K\"onig}}
\affiliation{Institute for Quantum Computing and Department of Applied Mathematics, University of Waterloo
}

\author{Fernando \surname{Pastawski}}

\affiliation{Institute for Quantum Information and Matter, California Institute of Technology
}
\begin{abstract}
We consider the problem of preparing  topologically ordered states
using unitary and non-unitary circuits, as well as local time-dependent Hamiltonian  and Liouvillian evolutions. We prove that for any topological code in $D$~dimensions, the time required to encode logical information into the ground space is at least~$\Omega(d^{1/(D-1)})$, where~$d$ is the code distance. 
This result is tight for the toric code, giving a scaling with the linear system size. 
More generally, we show that the linear scaling is necessary even when dropping the requirement of encoding:  preparing any state close to the ground space using dissipation takes an amount of time proportional to the diameter of the system in typical $2D$~topologically ordered systems, as well as for example the 3D and 4D toric codes.
\end{abstract}
\maketitle

\section{Introduction}
Topological codes, such as Kitaev's toric code or his quantum double models~\cite{Kitaevtoric}, the Levin-Wen model~\cite{LevinWen05}, or Bombin and Martin-Delgado's color codes~\cite{BombinDelgado07}, are a potential platform for the realization of robust quantum computation. 
Such a code is associated with a many-body  system
of qudits  arranged on the vertices of a regular lattice~$\Lambda$ in~$D$~spatial dimensions. Remarkably,  syndrome information can be extracted by measuring local  observables. Furthermore, the code distance is typically macroscopic, i.e., scales with the system size. These features promise to greatly facilitate the fault-tolerant storage and manipulation of  quantum information. 

Here we consider the problem of encoding states into such a code. This is the problem of transferring local (unprotected) information on a few qudits into the many-qudit ground space. More generally, we consider the generation of topological order: here we do not ask to prepare any specific state, but rather an arbitrary (possibly unknown) ground state. We show that both these problems are hard if no global control or interactions can be used. This is true irrespective of whether or not we allow dissipative processes or time-dependent interactions: any geometrically local encoding map takes a time scaling with code distance in~$2D$. Similarly, generation of typical $2D$~topological order on an $L\times L$~lattice takes at least a linear amount of time in~$L$. These bounds are tight for the toric code~\cite{DenKoePas13} in~$2D$. Our bounds  also yield statements for higher dimensions although we do not believe those are tight.

 To jointly treat the mentioned  (and potentially yet undiscovered) examples of topological codes, we formulate our results in the general framework of commuting  projector Hamiltonians. That is, the code space~$\cC\subset (\mathbb{C}^p)^{\otimes n}$ is the simultaneous $+1$-eigenspace of a family~$\{\Pi_a\}_a$ of pairwise commuting projections or, equivalently, the ground space of the Hamiltonian $H=-\sum_a \Pi_a$. The features distinguishing the class of  topological codes are
\begin{enumerate}[(i)]
\item the locality of the projections~$\{\Pi_a\}_a$: the support of each projection has diameter (on the lattice) upper bounded by the same constant~$\xi$.
 The support of an operator~$\Pi$ is the set of  qudits on which it acts non-trivially. 
\item
the code distance is an extensive function of the system size, typically e.g., of the form $d=\Omega(n^\alpha)$ for some constant $\alpha>0$.
\end{enumerate}
We  shall call a code~$\cC$ satisfying these properties a topological code. Our first result, Theorem~\ref{thm:generallowerbound} below applies to all such codes, and characterizes  encoders based on locality-preserving evolutions as defined below. Our second result, Theorem~\ref{thm:preparationtoporder}, bounds  the potential of locality-preserving evolutions to generate any (possibly undetermined) ground states. It applies to a subclass of topological codes described in detail below. This subclass includes all anyonic models in~$2D$, as well as higher-dimensional toric codes.

\subsection{Encoders for the code}
Given a code~$\cC$, we are interested in  encoders, i.e., completely positive trace-preserving maps (CPTPM)~$\cE$ which convert `simple' states into code states. We will use $A=A_1\cdots A_\kappa\subset\Lambda$ to refer to the $\kappa$-qudits that are being encoded. Here we are assuming that the code space has dimension~$p^\kappa$ for simplicity -- the generalization is straightforward.
Since our encoder~$\cE$ is supposed to take locally encoded information into the code space, we  assume that the qudits in~$A$ 
are nearest neighbors (i.e., form a simply connected subset of the lattice). Independently of the logical information that is being encoded, we assume that the remaining $n-\kappa $ qudits~$A^c=\Lambda\backslash A$ are in
a fixed product state~$\ket{\Phi}=\bigotimes_{j=1}^{n-\kappa}\ket{\Phi_j}\in(\mathbb{C}^p)^{\otimes( n-\kappa) }$. This is intended to be a state which is easy to prepare (see~\cite{DenKoePas13}). In fact, neither the product form nor the fact that it is pure are essential for our lower bound on encoding time. We use this convention as it corresponds to a natural operational restriction.

Given this setup, the notion of a (perfect) encoder is particularly easy to define for unitary maps~$U:(\mathbb{C}^p)^{\otimes n}\rightarrow (\mathbb{C}^p)^{\otimes n}$. A unitary encoder~$U$ takes the subspace
\begin{align}
\cCun:=(\mathbb{C}^p)^{\otimes \kappa}\otimes \mathbb{C}\ket{\Phi}\subset (\mathbb{C}^\kappa)^{\otimes n}\label{eq:ccun}
\end{align}
isomorphically to~$\cC$.  For general physical maps, that is, completely positive trace-preserving maps (CPTPMs), we define (approximate) encoders similarly as follows. 
\begin{definition}\label{def:dissipativeencoder}
A CPTPM $\cE$ {\em encodes $\cCun$ (cf.~\eqref{eq:ccun}) into the code~$\cC$ with error~$\epsilon$} if
\begin{align}
\|\cE(\cdot)-U\cdot U^\dagger\|_{\cS(\cCun)}\leq \epsilon\ .\label{eq:semigroupdistance}
\end{align}
Here $U$ is an (arbitrary) unitary encoder for~$\cC$, and  we use the norm
\begin{align}
\|\cE(\cdot)\|_{\cS(\cCun)}:=\max_{\rho: \substack{\rho\geq 0, \\
\supp(\rho)\subset \cCun\\
\tr(\rho)=1}}\|\cE(\rho)\|_1\ \label{eq:unstablenorm}
\end{align}
obtained by maximizing over all states with support in~$\cC_*$.
\end{definition}
\noindent Note that using the given notion of  distance in Definition~\ref{def:dissipativeencoder} (instead of the diamond norm) strengthens our lower bound on the encoding time. 

In a similar manner, we can define the notion of a preparation map:
\begin{definition}
A CPTPM~$\cE$ prepares a state in~$\cC$ with error~$\epsilon$ if
there is a product state~$\ket{\Phi}=\bigotimes_{j=1}^n\ket{\Phi_j}\in(\mathbb{C}^p)^{\otimes n}$ such that 
\begin{align*}
\min_{\substack{
\rho\geq 0\\
\supp(\rho)\subset\cC\\
\tr(\rho)=1}}\|\cE(\proj{\Phi})-\rho\|_1\leq \epsilon
\end{align*}
\end{definition}
Again, the fact that the initial state~$\ket{\Phi}$ is pure is unimportant and could be omitted. The main property of the state  we will need is that it has no classical correlations among distant qudits.

\subsection{Locality-preserving evolutions}
Our results apply to arbitrary evolutions which preserve locality. To define this notion in more detail, let us say that a CPTPM~$\cE$ is localizable with error~$G(r)$ 
iff for any $r>0$ and region~$B\subset\Lambda$, there exists a
 CPTPM~$\cE_{B(r)}$ supported on the $r$-neighborhood
$B(r):=\{x\in \Lambda\ |\ d(x,B)\leq r\}$ of $B$ such that
\begin{align*}
\|\cE^\dagger(O_B)-\cE_{\cB(r)}^\dagger(O_B)\|_\infty \leq \|O_B\|_\infty \cdot |B|\cdot G(r)\ 
\end{align*}
for any observable~$O_B$  supported on~$B$. (Here $\cE^\dagger$ is the adjoint map with respect to the Hilbert-Schmidt inner product.)
We shall call a family~$\{\cE^{(t)}\}$ of evolution operators (corresponding to some `time' $t$) {\em of  Lieb-Robinson-type} if each $\cE^{(t)}$ is localizable with error $G(r)=Ce^{vt-\gamma r}$ for some non-negative constants $C,v,\gamma$. A trivial example is a family $\{\cE^{(t)}\}_{t\geq 0}$ of (unitary or non-unitary) circuits where each~$\cE^{(t)}$ has circuit depth upper bounded by~$t$. Also, as argued in~\cite{Bravyietalpropagationtop} using the Lieb-Robinson bound, any time-dependent local Hamiltonian~$H(t)$  (or Hamiltonian with exponentially decaying interactions)  generates
via $U(t)=T\exp\left[i\int_0^t H(s)ds\right]$  a family $\cE^{(t)}(\cdot)=U(t)\cdot U(t)^\dagger$ of Lieb-Robinson type. Similarly and more generally, any time-dependent local Liouvillian~$\cL(t)$  with  bounded-strength, constant-range (or exponentially decaying) interactions  generates a family 
\begin{align}
\cE^{(t)}=T\exp\left[\int_0^t \cL(s)ds\right]\label{eq:dissipativeencoderdef}
\end{align} of Lieb-Robinson type, see~\cite[Lemma 5.3]{stabilitylocal} or \cite[Theorem 2]{BarthelKliesch}. The corresponding localized evolution~$\cE^{(t)}_{B(r)}$ is generated by the Liouvillian~$\cL_{B(r)}=\sum_{X\subset B(r)}\cL_X$ obtained by neglecting terms with support outside~$B(r)$ in the sum~$\cL=\sum_{X\subset\Lambda} \cL_X$. 

\section{Main results}
Our main result for encoders is the following:
\begin{theorem}\label{thm:generallowerbound}
Let $\cC$ be a $D$-dimensional topological code with distance~$d$. 
 Assume that  $\{\cE^{(t)}\}_t$ is a family of CPTPMs which is of Lieb-Robinson type. Assume further that for some $t>0$, $\cE^{(t)}$ encodes~$\cCun$ into~$\cC$ with constant error $\epsilon\ll 1$.  Then~$t\geq \Omega(d^{1/(D-1)})$.
\end{theorem}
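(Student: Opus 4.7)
The plan is to combine a Heisenberg-picture reformulation of the encoding condition with the Lieb--Robinson property, and to conclude via a cleaning argument together with the code distance. First, I would translate~\eqref{eq:semigroupdistance} into a statement about observables: fix a unitary encoder~$U$ and, for some qudit $a\in A$, a logical Pauli~$\bar Z$ on~$\cC$ whose pullback $U^\dagger\bar Z U$ acts, when restricted to~$\cCun$, as the single-qudit Pauli~$Z_a$. Dualizing the trace-norm bound~\eqref{eq:semigroupdistance} against~$\bar Z$ gives
\begin{align*}
\bigl|\tr\bigl(\cE^{(t),\dagger}(\bar Z)\,\sigma\bigr)-\tr(Z_a\,\sigma)\bigr|\leq O(\epsilon)\quad\text{for all }\sigma\in\cS(\cCun),
\end{align*}
so that $\cE^{(t),\dagger}(\bar Z)$ must reproduce the action of~$Z_a$ on~$\cCun$ up to error~$\epsilon$.

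Next, I would apply the Lieb--Robinson assumption: for any radius~$r$ there is an operator $M$ supported on $(\supp\bar Z)(r)$ with $\|\cE^{(t),\dagger}(\bar Z)-M\|_\infty\leq|\supp\bar Z|\cdot Ce^{vt-\gamma r}$. Taking $\bar Z$ to be a minimum-weight representative so that $|\supp\bar Z|=d$, and choosing $r=O(vt+\log(d/\epsilon))$, makes this Lieb--Robinson error smaller than~$\epsilon$. Combined with the first step, $Z_a$ agrees on~$\cCun$, up to $O(\epsilon)$, with an operator supported on~$(\supp\bar Z)(r)$. Since $Z_a$ acts non-trivially on the qudit~$a$, this forces $a\in(\supp\bar Z)(r)$. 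The same reasoning applied to every representative $\bar Z\cdot S$ (which induces the same action on~$\cC$ via a stabilizer~$S$ and hence agrees with $\bar Z$ in the Heisenberg picture on~$\cCun$ up to~$O(\epsilon)$) shows that every representative of~$\bar Z$ has support intersecting the ball $B(a,r)$; equivalently, $B(a,r)^c$ is correctable with respect to~$\bar Z$.

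Finally, I would invoke a cleaning argument for commuting-projector codes. Since $B(a,r)^c$ is correctable for~$\bar Z$, iteratively applying the local stabilizer generators (each of diameter $\leq\xi$) can concentrate the support of a representative onto an $O(\xi)$-thick boundary shell of~$B(a,r)$, of area $O(r^{D-1})$ in a $D$-dimensional lattice. Since the cleaned representative must still have weight $\geq d$ by the code distance, we obtain $r^{D-1}\gtrsim d$, i.e., $r\geq\Omega(d^{1/(D-1)})$. Combined with $r=O(vt)$ from the Lieb--Robinson error bound, this yields $t\geq\Omega(d^{1/(D-1)})$, as claimed.

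The main obstacle will be the cleaning step. It is what converts the bare weight bound~$d$ into the sharper $(D-1)$-dimensional surface-area bound $r^{D-1}\gtrsim d$; the naive volume estimate $|B(a,r)|=O(r^D)\geq d$ would only yield the weaker $r\geq\Omega(d^{1/D})$. Justifying the boundary-pushing uniformly for all commuting-projector topological codes requires careful use of both the locality~(i) and extensive-distance~(ii) assumptions, and is where the $d^{1/(D-1)}$ exponent really originates.
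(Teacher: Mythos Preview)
Your strategy shares the right ingredients with the paper's proof---Heisenberg picture, Lieb--Robinson localization, and cleaning---but the order of operations is inverted, and this creates two genuine gaps.

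First, the inference in your step~7 is not sound. Showing that every representative of $\bar Z$ meets $B(a,r)$ only tells you that $\bar Z$ cannot be cleaned \emph{into} $B(a,r)^c$; this is \emph{not} the statement that $B(a,r)^c$ is correctable, and it does not by itself allow you to push a representative onto a thin boundary shell of $B(a,r)$ as you attempt in step~8. What step~6 actually yields (if it holds) is that $B(a,r)$ is \emph{not} correctable. To turn this into a lower bound on~$r$ you need precisely the Bravyi--Poulin--Terhal result (Lemma~\ref{lem:correctableregionlemma}): cubes of linear size below $c\,d^{1/(D-1)}$ \emph{are} correctable, hence $r\geq c\,d^{1/(D-1)}$. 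Your proposed ``boundary-shell'' cleaning is essentially an attempt to re-derive that lemma rather than invoke it, and you correctly flag it as the main obstacle---but it is already available as a black box.

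Second (more minor), applying the Lieb--Robinson bound to \emph{every} representative $\bar Z\cdot S$ with the \emph{same} radius~$r$ is not justified as written: the localization error scales with $|\supp(\bar Z\cdot S)|$, which is not bounded by~$d$ across representatives. You would need $r$ to absorb a factor $\log n$ rather than $\log d$.

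The paper avoids both issues by reversing the order. It invokes Lemma~\ref{lem:correctableregionlemma} \emph{first} to conclude that a cube $\Gamma_R$ of linear size $R=c\,d^{1/(D-1)}$ containing~$A$ near its center is correctable, then cleans an arbitrary nontrivial logical operator to $B:=\Gamma_R^c$, and only then applies Lieb--Robinson \emph{once} to this single cleaned operator~$\bar P_B$ (with the crude bound $|B|\leq n$ in the error prefactor). Choosing $r=R/3$, the localized operator $(\cE^{(t)}_{B(r)})^\dagger(\bar P_B)$ is supported entirely in~$A^c$ and therefore cannot distinguish the unencoded states $\rho_b\otimes\Phi_{A^c}$; this contradicts the encoding property unless $vt\gtrsim R$, i.e., $t=\Omega(d^{1/(D-1)})$. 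The exponent $1/(D-1)$ thus enters through the black-box use of Lemma~\ref{lem:correctableregionlemma}, not through an ad hoc shell argument.
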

\noindent For the  paradigmatic case of the $2D$~toric code, we obtain:
\begin{corollary}\label{cor:toriccode}
Consider Kitaev's toric code on an $L\times L$ lattice. Then any dissipative encoder takes time at least linear in $L$.
\end{corollary}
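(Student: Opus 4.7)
The plan is to obtain the corollary as an immediate specialization of Theorem~\ref{thm:generallowerbound}. First I would verify that Kitaev's toric code on an $L\times L$~lattice fits into the general framework of topological codes as defined above. The Hamiltonian $H=-\sum_v A_v -\sum_p B_p$ is a sum of commuting star and plaquette operators $A_v,B_p$, each of which is a tensor product of a constant number ($4$) of Pauli operators on qubits incident to a vertex or plaquette. Since $A_v^2=B_p^2=\id$, the spectral projections $\Pi_{v}=(\id+A_v)/2$ and $\Pi_p=(\id+B_p)/2$ are pairwise commuting projectors, each supported on a region of lattice diameter bounded by a universal constant~$\xi$. The code space~$\cC$ is their simultaneous $+1$-eigenspace. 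Therefore condition~(i) in the definition of a topological code is satisfied with $D=2$. Condition~(ii) holds as well: the code distance of the toric code equals the length of the shortest non-contractible loop in either the lattice or its dual, which equals~$L=\Theta(\sqrt{n})$.

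Second, I would translate the notion of a ``dissipative encoder'' in the corollary statement into the hypothesis of Theorem~\ref{thm:generallowerbound}. Any such encoder, whether realized as the time-$t$ map of a time-dependent geometrically local Liouvillian of the form~\eqref{eq:dissipativeencoderdef}, or as a non-unitary circuit of depth~$t$, gives rise to a family $\{\cE^{(t)}\}_t$ of CPTPMs of Lieb-Robinson type, as recalled in the discussion preceding Theorem~\ref{thm:generallowerbound} (invoking~\cite[Lemma 5.3]{stabilitylocal} or~\cite[Theorem~2]{BarthelKliesch}). Thus, if $\cE^{(t)}$ encodes~$\cCun$ into~$\cC$ with some constant error~$\epsilon\ll 1$, the hypotheses of Theorem~\ref{thm:generallowerbound} are met.

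Third, I apply Theorem~\ref{thm:generallowerbound} with $D=2$ and $d=L$ to conclude
\begin{align*}
t\;\geq\;\Omega\bigl(d^{1/(D-1)}\bigr)\;=\;\Omega(L)\ ,
\end{align*}
which is exactly the claim of the corollary.

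There is no real obstacle here; the substantive work is entirely contained in Theorem~\ref{thm:generallowerbound}. The only points one must double-check are purely verificational: that the star/plaquette operators indeed yield a family of commuting projectors of constant-bounded support (so~(i) holds with $\xi$ a small universal constant), that the toric-code distance equals~$L$ rather than some sublinear quantity, and that the Lieb-Robinson-type property covers the class of dissipative evolutions one has in mind. Each of these is a standard and well-documented fact about Kitaev's model, so the derivation of the corollary is purely a matter of matching definitions.
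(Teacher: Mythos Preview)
Your proposal is correct and matches the paper's approach: the corollary is stated as an immediate specialization of Theorem~\ref{thm:generallowerbound} with $D=2$ and $d=L$, and the paper does not give a separate proof beyond this. Your verification that the toric code satisfies conditions~(i)--(ii) and that dissipative evolutions are of Lieb-Robinson type is exactly the definition-matching the paper implicitly assumes.
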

In~\cite{DenKoePas13}, an explicit construction of a time-independent Liouvillian is given which  acts as an encoder for the toric code in linear time. This shows that Corollary~\ref{cor:toriccode} is tight. Corollary~\ref{cor:toriccode} genereralizes the result of~\cite{Bravyietalpropagationtop}, where a linear lower bound is shown for unitary encoders. 
Note that the best known geometrically local unitary encoder~\cite{dennisetal02}  for the toric code takes time~$\Theta(L^2)$. Our result hence shows that the speedup of dissipative compared  to unitary processes is at most linear in~$L$.
 
For the problem of generation of topological order, we need to introduce a few additional notions. We shall call an observable~$\logicalL$
 logical if $\logicalL$ commutes with all projections~$\Pi_a$ defining the code, and if it acts non-trivially on the subspace~$\cC$. 
Any measurement (or POVM)~$\cM$ which produces the measurement statistics corresponding to the observable~$\logicalL$ will be called a realization of~$\logicalL$.  If, for a given state~$\rho$, measuring~$\logicalL$ leads to a non-deterministic outcome, we call $\logicalL$ an uncertain observable in state~$\rho$. In fact, we need to be more precise as we are generally dealing with a family of codes parametrized by their system size~$L$. For such a family, a logical observable~$\logicalL$ (with a constant number of eigenvalues) is said to be uncertain for~$\rho$ if the measurement uncertainty, as measured e.g., by the Shannon entropy, is a non-zero constant independent of the system size. 
 
\begin{theorem}\label{thm:preparationtoporder}
Let $\cC$ be the codespace of  a $D$-dimensional topological code parametrized by~$L$. 
Assume that for any state $\rho$ supported on~$\cC$, there is an uncertain logical observable~$\logicalL$ having two realizations $\cM^{(1)},\cM^{(2)}$ satisfying
\begin{align*}
d(\supp(\cM^{(1)}),\supp(\cM^{(2)}))\geq cL\  
\end{align*}
for some constant $c>0$.
Let $\{\cE^{(t)}\}_t$ be a family of CPTPMs  of Lieb-Robinson type. If $ \| \cE^{(t)}(\Phi) -\rho \|_1 \leq \epsilon $  for some initial product state $\Phi$, a constant~$0<\epsilon\ll 1$ and some state $\rho$ supported on~$\cC$, then $t = \Omega(L)$. 
\end{theorem}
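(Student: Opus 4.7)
The plan is to argue by contradiction: suppose $t=o(L)$ and derive an inconsistency with $\|\cE^{(t)}(\Phi)-\rho\|_1\leq\epsilon$. The physical idea is that a locality-preserving evolution acting for sub-diameter time on a product state cannot produce the long-range classical correlations between the outcomes of two spatially separated realizations of $\logicalL$ that any code state $\rho$ must exhibit.

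The execution proceeds in three steps. First, I would apply the Lieb-Robinson property of $\{\cE^{(t)}\}_t$ to each POVM element $M^{(j)}_\lambda$ of $\cM^{(j)}$ ($j=1,2$): choosing a radius $r=cL/3$, one obtains an operator $\tilde{M}^{(j)}_\lambda$ supported on $\supp(\cM^{(j)})(r)$ with
\begin{align*}
\|\cE^{(t)\dagger}(M^{(j)}_\lambda)-\tilde{M}^{(j)}_\lambda\|_\infty\leq |\supp\cM^{(j)}|\cdot Ce^{vt-\gamma r}\ ,
\end{align*}
which tends to zero once $t=o(L)$, the polynomial prefactor $|\supp\cM^{(j)}|\leq\poly(L)$ being killed by the exponential, while the supports of the approximating POVMs $\tilde{\cM}^{(1)}$ and $\tilde{\cM}^{(2)}$ remain disjoint. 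Since $\Phi$ is a product state, the joint outcome distribution of $\tilde{\cM}^{(1)}\otimes\tilde{\cM}^{(2)}$ on $\Phi$ factorizes; translating back, the joint distribution $P_{\cE^{(t)}(\Phi)}$ of $\cM^{(1)}\otimes\cM^{(2)}$ on $\cE^{(t)}(\Phi)$ is $o(1)$-close in total variation to the product of its marginals, each of which is $(\epsilon+o(1))$-close to the $\logicalL$-distribution $p$ on $\rho$.

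On $\rho$ itself, the two realizations are supported on disjoint (hence commuting) regions, and measuring them jointly extracts the value of $\logicalL$ twice on a code state; hence $P_\rho(\lambda_1,\lambda_2)=p(\lambda_1)\delta_{\lambda_1,\lambda_2}$, whose total-variation distance from $p\otimes p$ is $1-\sum_\lambda p(\lambda)^2=\Omega(1)$ by the uncertainty of $\logicalL$. Contractivity of total variation under measurement gives $\|P_\rho-P_{\cE^{(t)}(\Phi)}\|_{TV}\leq\epsilon$, so $P_\rho$ would be simultaneously $\Omega(1)$-far from $p\otimes p$ and $(\epsilon+o(1))$-close to it, which is impossible for $\epsilon\ll 1$ and $L$ large. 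The main obstacle I anticipate is the rigorous justification of the concentration $P_\rho(\lambda_1,\lambda_2)=p(\lambda_1)\delta_{\lambda_1,\lambda_2}$: it requires that two POVMs both realizing $\logicalL$ on every code state, applied commutingly, yield equal outcomes with probability one. For stabilizer codes this is immediate since the two realizations differ by a stabilizer acting as the identity on $\cC$, but the general statement must be extracted carefully from the definition of ``realization'' together with the logical (stabilizer-commuting) structure of $\logicalL$.
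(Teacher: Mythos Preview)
Your proposal is correct and follows essentially the same route as the paper: localize the two realizations via the Lieb-Robinson property, use the product structure of~$\Phi$ to conclude that the joint outcome distribution of $\cM^{(1)}\otimes\cM^{(2)}$ on $\cE^{(t)}(\Phi)$ is (approximately) a product, and contrast this with the perfect correlation of the outcomes on the code state~$\rho$. The only cosmetic difference is that the paper phrases the contradiction in terms of Shannon entropies (using $H(\cM(\rho))=H(\cM^{(j)}(\rho))$ versus $H(\cM(\cE^{(t)}(\Phi)))\approx\sum_j H(\cM^{(j)}_{B_j(r)}(\Phi))$), whereas you compare $P_\rho$ directly to $p\otimes p$ in total variation; your formulation is arguably cleaner since it avoids continuity estimates for the entropy. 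The obstacle you flag---that two disjointly supported realizations of~$\logicalL$ must yield identical outcomes on any code state---is exactly the assertion the paper makes without further comment (``it yields perfectly correlated results when measuring a ground state~$\rho$''); in all the applications the realizations arise as logical operators commuting with the code projector (via cleaning or translation), so $P_\cC M^{(j)}_\lambda P_\cC$ equals the eigenprojector~$P_\lambda$ and the joint distribution collapses to the diagonal as you need.
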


We will argue that many natural codes satisfy the conditions of Theorem~\ref{thm:preparationtoporder}. 
This is particularly easy to see in cases where the logical operators are explicitly known. 
For example,  we  show  that preparing a ground state of the  $D$ dimensional toric code on an $L^{\times D}$-lattice takes at least $\Omega(L)$~time. More generally, as we explain below,  Theorem~\ref{thm:preparationtoporder} applies to all topological stabilizer codes originating from translationally invariant generators accomodating a constant number $k$ of encoded qubits (STS codes) if $D\leq 3$. For the important case of topologically ordered systems in~$2D$, we get the following  statement encompassing e.g., the toric code, the Levin-Wen model, or Bombin's color codes:
\begin{corollary}\label{cor:topologicalcodes}
Let~$\cC$ be a topological code on an $L\times L$~periodic lattice of qudits in~$2D$. If~$\cC$ is associated with a topological quantum field theory (TQFT), then preparation of ground states by dissipative processes takes at least linear time in $L$.
\end{corollary}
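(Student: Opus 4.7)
The plan is to verify the hypothesis of Theorem~\ref{thm:preparationtoporder} for any codespace~$\cC$ associated with a non-trivial 2D TQFT on the $L\times L$ periodic lattice, and then invoke that theorem directly. Recall that the TQFT association provides a finite collection of superselection sectors (anyon labels)~$a$, a ground-space dimension $N \geq 2$ on the torus (equal to the number of anyon types), and for each non-contractible cycle~$\gamma$ a family of Wilson-loop (string) operators $\{W_a(\gamma)\}_a$ commuting with every local projection~$\Pi_b$ of the parent Hamiltonian. Crucially, $W_a(\gamma)$ is a product of constant-range operators supported in an $O(1)$-width tubular neighborhood of~$\gamma$, and its restriction to~$\cC$ depends only on the homotopy class of~$\gamma$.

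First, I fix once and for all the finite set~$\mathcal{L}$ of logical observables built from Hermitian and anti-Hermitian parts of $\{W_a^h, W_a^v\}_{a \neq 1}$, where $h,v$ denote the two independent non-contractible cycles. Since the TQFT is non-trivial, its modular $S$-matrix forces the horizontal and vertical Wilson loops to obey non-trivial braiding/commutation relations, so the algebra they generate on~$\cC$ is non-commutative with trivial center~$\mathbb{C}\cdot \mathbb{I}$. I then argue that for any density operator~$\rho$ on~$\cC$, at least one observable~$\logicalL \in \mathcal{L}$ is uncertain with an entropy bound $c>0$ independent of~$L$. Indeed, if every $\logicalL \in \mathcal{L}$ were $c$-close to deterministic on~$\rho$, then $\rho$ would be close to a density operator commuting with every generator of the logical algebra, hence close to~$\mathbb{I}/N$; but on the maximally mixed state every non-trivial element of~$\mathcal{L}$ has uncertainty at least~$\log 2$, contradicting the assumption for~$c$ small enough. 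A compactness argument on the finite-dimensional set of states on~$\cC$ yields a uniform~$c$ valid for all~$\rho$, and this $c$ depends only on the TQFT, not on~$L$.

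Second, for this $\logicalL$ I exhibit two realizations~$\cM^{(1)}, \cM^{(2)}$ supported at distance~$\geq L/2 - O(1)$. Since~$\logicalL$ is (a combination of) Wilson loops along one of the non-contractible cycles, say the horizontal one, I choose two parallel horizontal representatives $\gamma_1, \gamma_2$ at vertical separation exactly~$L/2$. The associated local POVMs $\cM^{(i)}$ are obtained by jointly measuring the constant-range factors appearing in the string expression for $W_a(\gamma_i)$ and classically post-processing their outcomes into eigenvalues of~$\logicalL$. Homotopy invariance of the Wilson loops on~$\cC$ ensures that $\cM^{(1)}$ and $\cM^{(2)}$ yield identical statistics on every codestate, so they are both valid realizations of~$\logicalL$. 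Their supports lie in disjoint $O(1)$-width tubes around $\gamma_1,\gamma_2$, so $d(\supp(\cM^{(1)}), \supp(\cM^{(2)})) \geq L/2 - O(1) = \Omega(L)$. Theorem~\ref{thm:preparationtoporder} then yields the claimed $t=\Omega(L)$ lower bound.

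The principal obstacle is the middle step, carried out in a genuinely model-independent way: one must extract from the abstract statement ``associated with a TQFT'' both (i) the existence of local string operators with the stated tubular support and homotopy invariance on~$\cC$, and (ii) quantitative uncertainty constants independent of~$L$. For each concrete realization (Kitaev quantum doubles, Levin--Wen string nets, Bombin--Martin-Delgado color codes) all required ingredients are explicit, but a fully axiomatic derivation requires pinning down which TQFT features are implicit in the phrase ``associated with a TQFT'' as used in the corollary. A pragmatic reading is to take this phrase to mean ``realized by the standard anyonic lattice construction of a non-trivial TQFT,'' in which case every ingredient above is available and the proof goes through as sketched.
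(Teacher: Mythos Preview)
Your proposal is correct and follows the same overall architecture as the paper's proof---verify the hypotheses of Theorem~\ref{thm:preparationtoporder} by exhibiting an uncertain logical observable with two distant string realizations, then invoke the theorem---but the mechanism you use to guarantee uncertainty is genuinely different. The paper exploits the two canonical bases $\{\ket{i}_1\}$ and $\{\ket{j}_2\}$ of the ground space related by the modular $S$-matrix, and applies the Maassen--Uffink entropic uncertainty relation directly: $H(\cM_{\bar{L}_1}(\rho))+H(\cM_{\bar{L}_2}(\rho))\geq -\log\max_{i,j}|S_{ij}|^2$. This immediately yields an explicit, $L$-independent entropy lower bound for at least one of the two string measurements, with no compactness argument needed. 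Your route instead argues that if all Wilson-loop observables were nearly sharp then $\rho$ would lie near the commutant of the full logical algebra (hence near the maximally mixed state), and then extracts a uniform constant via compactness on the $N$-dimensional state space. Both are valid; the paper's version is shorter and produces a concrete bound in terms of the TQFT data, while yours is more structural and would generalize to situations where one knows only irreducibility of the logical algebra rather than an explicit $S$-matrix. Your closing caveat about what ``associated with a TQFT'' must supply (local string operators with tubular support and homotopy invariance) is well taken and applies equally to the paper's argument.
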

\noindent The choice of a system on a torus is arbitrary and for concreteness only. However, we require the system to have a ground space degeneracy. Therefore, (similar to earlier work~\cite{Bravyietalpropagationtop}), our results do not provide information about the problem of preparing topological order, e.g., on a sphere.

\section{Non-unitary encoders}
\subsection{Proof sketch for the $2D$ toric code}
Let us briefly recall the argument from~\cite{Bravyietalpropagationtop}, which shows that evolution $U=\cT\exp\left(i\int_0^t H(s)ds\right)$ under a time-dependent Hamiltonian cannot act as an encoder unless the evolution time~$T$ is at least linear in the system size. This is based on the observation that for an arbitrary pair $\bar{\rho}_0,\bar{\rho}_1$ of encoded states, 
we can construct an observable of the form $\logicalL=U (\mathsf{L}\otimes I_{\mathbb{C}^2}^{\otimes n-k})U^\dagger$, which perfectly distinguishes the states.
 Here $\mathsf{L}$ is the (local) $k$-qubit observable that distinguishes the `decoded' states $\rho_0=U^\dagger \bar{\rho}_0 U$ and $\rho_1=U^\dagger\bar{\rho}_1 U$. Lieb-Robinson bounds are then used to argue that~$\logicalL$ is a local operator for small~$t$. Because the code distance is macroscopic (and hence no local operator can distinguish encoded states), this leads to a lower bound on~$t$.  Observe that this argument relies heavily on the fact that the encoder is unitary and its inverse acts as a decoder. 
 
 Conceptually, the previous argument proceeds by constructing a logical operator for the code from a local operator (using the inverse encoder).
 This line of reasoning cannot  simply be translated  to the case of dissipative encoders because these are not invertible. The following proof, while similar in spirit, turns the argument around: logical observables are converted to operators extracting local information from the unencoded states. This circumvents the difficulty of not having an inverse.  However, the relation to the code distance appears to be more subtle in our proof. Indeed, for general topological codes, we require an additional ingredient beyond Lieb-Robinson bounds as we discuss below.

  For didactical reasons, let us first give a rough sketch for the toric code on an $L\times L$ grid. The latter issue does not appear in this case due to the fact that logical operators are known explicitly.  Assume that~$\cE^{(t)}$ encodes $\cCun$ into $\cC$ with small error $\epsilon$, in a time $t\ll L$ (e.g., $t\sim L^{1/2}$). We will show that this leads to a contradiction.
  
  For the toric code, the quantum information (before encoding) is supposed to be localized on two neighboring qubits $A=A_1A_2$.  We begin by choosing a set  $B\subset A^c$ of qubits such that
\begin{enumerate}[(i)]
\item
The distance $d(B,A)=\min_{x\in B} d(x,A)$ between the qubits $B$ and $A$ (on the lattice) is at least $d(B,A)\geq L/3$.
\item
the number of qubits in $B$ is $|B|=L$ and
\item 
There is a non-trivial logical  operator $\bar{P}_B$  with eigenvalues $\{+1,-1\}$ within the code space supported inside $B$.
 \end{enumerate} 
 Concretely, $B$ includes all qubits along cycle on the torus (where the cycle is located at distance $L/3$ from $A$), and $\bar{P}_B=X^{\otimes L}$ can be chosen as the tensor product of the Pauli-$X$ operator along this line.
 
 Next we pick two orthogonal encoded states~$\bar{\rho}_0,\bar{\rho}_1$ which can be perfectly distinguished using the observable~$\bar{P}_B$. Because $\cE^{(t)}$ is an (approximate) encoder by assumption, there are two-qubit states $\rho_0,\rho_1$ such that
 \begin{align*}
\cE^{(t)}(\rho_0\otimes\Phi_{A^c})\approx\bar{\rho}_0\textrm{ and }\cE^{(t)}(\rho_1\otimes\Phi_{A^c})\approx\bar{\rho}_1\ .
 \end{align*}
 Using the fact that $\bar{P}_B$ distinguishes the encoded states on the rhs.~perfectly, we conclude that
 the observable~$(\cE^{(t)})^\dagger(\bar{P}_B)$ distinguishes the unencoded states
 \begin{align}
 \rho_0\otimes\Phi_{A^c}\qquad\textrm{ and }\qquad \rho_1\otimes\Phi_{A^c}\label{eq:complementary}
 \end{align}
 almost perfectly. But according to the Lieb-Robinson bound and the fact that $A$ and $B$ are far from each other, $(\cE^{(t)})^\dagger(\bar{P}_B)$ is an operator with no support on~$A$ for small times.  This contradicts the fact that no such operator can distinguish the two states~\eqref{eq:complementary}.

 \subsection{General proof of Theorem~\ref{thm:generallowerbound}}
 To generalize the proof to arbitrary topological codes, the main additional step  is to show that there exists a region~$B$ with properties analogous to~$(i)-(iii)$.  To do so, we will use the notion of correctable regions introduced in~\cite{BravyiTerhal09} and corresponding `cleaning' results of~\cite{BravyiPoulinTerhal10}. This more general proof also clarifies how the macroscopic code distance comes into play.
 
 Recall that a subset of qudits $\Gamma\subset\Lambda$ is called correctable if the encoded information can be recovered
 even after losing all qudits in~$\Gamma$, i.e., if
 there is a decoding CPTPM~$\cD$ such that
 \begin{align*}
 \cD\circ\tr_{\Gamma}(\bar{\rho})=\bar{\rho}\qquad\textrm{ for all encoded states }\bar{\rho}\in \cB(\cC)\ .
 \end{align*}
 A simple example is any set~$\Gamma$ containing fewer qudits than the code distance, that is, $|\Gamma|<d$. A much less trivial statement which is shown and used in~\cite{BravyiPoulinTerhal10} is the following.
 Let us define the cube $\Gamma_R(v)\subset\Gamma$ for $v\in\Gamma$
as  the  rectangular block of
 size $\underbrace{R\times R\times\cdots\times R}_{D\textrm{ times}}$, i.e., a cube of linear size~$R$ aligned with the coordinate axes and centered around some location~$v$ of the lattice (according to some convention).
 \begin{lemma}[see~\cite{BravyiPoulinTerhal10}]\label{lem:correctableregionlemma}
 Let~$\cC$ be a $D$-dimensional topological code with distance~$d$ and interaction length~$\xi$. Then there is a constant $c=c(\xi)>0$ such that
 all  cubes $\Gamma_{R}(v)$, $v\in\Lambda$ with  $R\leq c d^{\frac{1}{D-1}}$ are correctable.
  \end{lemma}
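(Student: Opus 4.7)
The plan is to invoke the cleaning characterization of correctable regions from~\cite{BravyiPoulinTerhal10}: a region $\Gamma\subset\Lambda$ is correctable if and only if every nontrivial logical operator $\logicalL$ (one commuting with all $\Pi_a$ and acting non-trivially on $\cC$) admits an equivalent representative $\logicalL'$, acting identically on $\cC$, whose support is disjoint from $\Gamma$. The content of the lemma thus reduces to showing that every such $\logicalL$ can be pushed out of the cube $\Gamma_R(v)$ whenever $R\leq cd^{1/(D-1)}$ for a suitably small $c=c(\xi)$.

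The strategy is a slab decomposition. I would partition $\Gamma_R(v)$ into $m=O(R/\xi)$ parallel slabs $S_1,\ldots,S_m$ of thickness $O(\xi)$ perpendicular to one coordinate axis, separated by buffer strips of width larger than $2\xi$. Each slab then contains $O(\xi R^{D-1})$ qudits, so for $c$ chosen sufficiently small the hypothesis $R\leq cd^{1/(D-1)}$ forces $|S_i|<d$. The minimum-distance hypothesis then rules out any nontrivial logical operator supported inside a single $S_i$, so each slab is individually correctable. The main step is a BPT-style union principle: if a family $\{S_i\}$ is individually correctable and the $\xi$-neighborhoods of distinct $S_i$ are pairwise disjoint, then $\bigcup_i S_i$ is correctable. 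One proves this by iterating the individual cleanings, noting that the cleaning at $S_i$ is implemented by multiplying $\logicalL$ by operators in the local algebra generated by those projectors $\Pi_a$ whose support meets the $\xi$-neighborhood of $S_i$; by the buffer condition, such a modification cannot reintroduce support on any $S_j$ with $j<i$. A constant number of shifted slab decompositions---translates of the original arrangement whose new slabs cover the buffers left by the previous arrangement---then cleans the residual buffer strips and establishes correctability of the full cube.

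The main obstacle is the locality bookkeeping required to guarantee that successive cleaning passes, in particular those across different shifted decompositions, do not spoil earlier ones. For a general topological code (beyond stabilizer codes), this requires working directly with the commuting-projector structure, exploiting both that each $\Pi_a$ has geometric range at most $\xi$ and that cleanings can be organized into a constant number of parallel batches whose affected $\xi$-neighborhoods remain disjoint within each batch. Once this combinatorial arrangement is in place, the union of shifted decompositions covers $\Gamma_R(v)$, and the resulting representative of $\logicalL$ is supported in $\Lambda\setminus\Gamma_R(v)$. The constant $c=c(\xi)$ in the lemma absorbs the multiplicative factors from slab thicknesses, buffer widths, and the (constant) number of shifted decompositions, all of which depend only on $\xi$ and $D$.
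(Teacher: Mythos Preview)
The paper does not give its own proof of this lemma; it is quoted directly from~\cite{BravyiPoulinTerhal10} and used as a black box. Your sketch reproduces the argument of that reference: slice $\Gamma_R(v)$ into thin slabs each containing $O(\xi R^{D-1})<d$ qudits (hence individually correctable), invoke a union/expansion principle for well-separated correctable regions, and iterate over a constant number of shifted slab arrangements to absorb the buffers, with all geometric constants depending only on $\xi$ and $D$. This is correct in outline and is precisely what the paper is citing; your remark that the locality bookkeeping needs extra care beyond the stabilizer setting is also apt, since~\cite{BravyiPoulinTerhal10} is formulated for stabilizer codes while the present paper works with general commuting-projector codes.
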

 Correctable regions are a convenient proof tool because of the following  statement: If $\Gamma\subset\Lambda$ is correctable, and $\bar{P}$ is a logical operator of the code, then there exists a logical operator $\bar{P}'$ with support outside $\Gamma$, such that the actions of $\bar{P}$ and $\bar{P}'$ on the code space~$\cC$ agree. In other words, the support of a logical operator~$\bar{P}$ can be modified to exclude qudits in~$\Gamma$ (explaining the terminology `cleaning') without affecting its action on encoded states.  We will sometimes refer to this as the cleaning lemma~\cite{BravyiTerhal09}. It is an immediate consequence of the definitions.
 
\begin{figure}
\epsfig{file=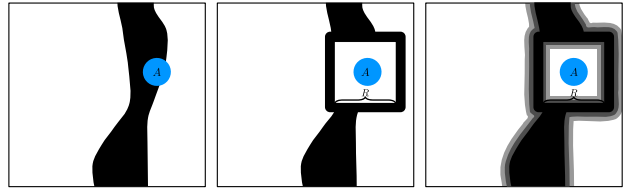,width=\columnwidth}
\caption{This figure illustrates the proof of Theorem~\ref{thm:generallowerbound}. Region $A$ consists of the qubits carrying the logical information before the encoding. A logical operator may have support on this region, but can be cleaned out. Applying the Heisenberg evolution of the encoding map to the operator smears out its support. If the evolution is sufficiently local, the resulting operator still has no support in~$A$ and can therefore not distinguish unencoded states.
  \label{fig:cleaninglemmafigure}}
\end{figure}

With this, we are equipped to give the full proof of Theorem~\ref{thm:generallowerbound}. 
 \begin{proof}
 As before, let $A\subset\Lambda$, $|A|=\kappa$ be the set of qudits carrying the quantum information to be encoded.
  Our first goal is to show the existence of a suitable set~$B$. 
 Let $c$ be the constant from Lemma~\ref{lem:correctableregionlemma}, that is, any cube~$\Gamma_R$ of linear size $R=cd^{\frac{1}{D-1}}$ is correctable see Fig.~\ref{fig:cleaninglemmafigure}. We choose a block~$\Gamma_{R}$ that contains the qudits $A$ near its center. Then we set $B:=\Gamma_{R}^c=\Lambda\backslash\Gamma_R$.  Since $\kappa\ll d$,
  this guarantees that
 \begin{enumerate}[(i)]
 \item
 $d(B,A)\geq \frac{c}{3}d^{\frac{1}{D-1}}$.
 \end{enumerate} 
  Clearly, we also have the trivial bound
 \begin{enumerate}[(i)]\setcounter{enumi}{1}
 \item
 $|B|\leq n$\label{it:upperboundbsize}
 \end{enumerate}
 Since $B$ is the complement of $\Gamma_R$, Lemma~\ref{lem:correctableregionlemma} and the cleaning lemma applied to the region~$\Gamma_R$ imply that for any logical operator, we can find an equivalent logical operator with support completely contained in~$B$. This implies, in particular, that
 \begin{enumerate}[(i)]\setcounter{enumi}{2}
 \item
 there is a logical operator $\bar{P}_B$ with eigenvalues $\{+1,-1\}$ with support $\supp(\bar{P}_B)\subset B$.
 \end{enumerate}
 Consider two orthogonal encoded states $\bar{\rho}_0,\bar{\rho}_1$ with
\begin{align}
\tr(\bar{P}_B(\bar{\rho}_0-\bar{\rho}_1))=2\ .\label{eq:boundone}
 \end{align} 
Defining $\rho_b$ for $b\in \{0,1\}$ by $(\rho_b)_A\otimes \Phi_{A^c}:=U^\dagger \bar{\rho}_b U$ (where $U$ is a unitary encoder), we have 
\begin{align*}
\tr(\bar{P}_B(\bar{\rho}_0-\bar{\rho}_1))&=
\tr(\bar{P}_B U(\rho_0\otimes\Phi-\rho_1\otimes\Phi)U^\dagger)\ .
 \end{align*} 
 In particular, if $\cE^{(t)}$ encodes into~$\cC$ with error~$\epsilon$, then this implies (with $\tr(AB)\leq \|A\|_\infty\cdot \|B\|_1$ and $\|P_B\|_\infty=1$) that 
 \begin{align}
 \big|\tr(\bar{P}_B(\bar{\rho}_0-\bar{\rho}_1))-\tr(\bar{P}_B \cE^{(t)}((\rho_0-\rho_1)\otimes\Phi)\big|\leq \epsilon\ .\label{eq:boundtwo}
 \end{align}
 Combining~\eqref{eq:boundone} and~\eqref{eq:boundtwo}, we get for $P(t):=(\cE^{(t)})^\dagger(\bar{P}_B)$
 \begin{align}
 \tr\left(P(t)((\rho_0-\rho_1)\otimes\Phi)\right)\geq 2-\epsilon\ .\label{eq:ptlowerbound}
 \end{align}
 Set $P_r(t):=(\cE^{(t)}_{B_r})^\dagger(\bar{P}_B)$, where $\cE^{(t)}_{B_r}$ is the localized evolution (according to the Lieb-Robinson property). We then have
   using~\eqref{it:upperboundbsize}, $\|\bar{P}_B\|_\infty=1$ and choosing $r=\frac{c}{3}d^{\frac{1}{D-1}}$
 \begin{align*}
 \|P(t)-P_{r}(t)\|_\infty \leq C n\exp(vt-\gamma \frac{c}{3}d^{\frac{1}{D-1}})\ , 
 \end{align*}
 for some nonnegative constants $C,v,\gamma$. This, combined with~\eqref{eq:ptlowerbound} and 
 $\|(\rho_0-\rho_1)\otimes\Phi\|_1\leq 2$ gives
 \begin{align}
\begin{split} \tr\left(P_{r}(t)((\rho_0-\rho_1)\otimes\Phi)\right)&\geq\\
&\hspace{-12ex} 2-\epsilon-2Cn\exp(vt-\gamma \frac{c}{3}d^{\frac{1}{D-1}})\ .
\end{split}\label{eq:plthree}
 \end{align}
But $P_{r}(t)$ is, by definition supported on 
the set $B(r):=\{x\ |\ d(x,B)\leq r\}$. By definition of $r$ it is easy to check that
$B(r)\subset A^c$, i.e., it has no intersection with $A$. This implies
\begin{align}
 \tr\left(P_{r}(t)((\rho_0-\rho_1)\otimes\Phi)\right)=0\ .\label{eq:plfour}
\end{align}
   Eqs.~\eqref{eq:plthree} and~\eqref{eq:plfour} are compatible only if $t\geq \Omega(d^{\frac{1}{D-1}})$, as claimed.
   
   \end{proof} 

\section{Preparation of topological order}
Next we discuss the proof of Theorem~\ref{thm:preparationtoporder}. In contrast to the case of encoding maps, there is no distinguished subset of qudits which we can use to argue. Instead, we will use the fact that there is a pair of measurements which yields correlated results in any ground state, but independent outcomes for product states. If the preparation map is locality-preserving, this leads to a contradiction as this property is preserved.

\subsection{Proof sketch for the~$2D$ toric code}
Consider for simplicity the case of the toric code. 
Suppose we have an initial product state $\ket{\Phi}=\bigotimes_j \ket{\Phi_j}$, which
is transformed into an~$\epsilon$-approximation~$\cE^{(t)}(\Phi)\approx \rho_{GS}$ of a ground state~$\rho_{GS}$.  We will focus on the anticommuting pair of logical operators $(\bar{X},\bar{Z})$ associated with the first encoded qubit. For the state~$\rho_{GS}$ we know that the expectation values satisfy
\begin{align}
\langle \bar{X}\rangle_\rho^2 +  \langle \bar{Z} \rangle_\rho^2 \leq 1\ .\label{eq:uncertaintyrelationgstoric}
\end{align}  Let us without loss of generality assume that 
\begin{align*}
\langle \bar{Z}\rangle_\rho^2 \leq 1/2\ . 
\end{align*}  We also have that $\langle \bar{Z}^2 \rangle_\rho=1$.

Let us now take two incarnations~$\bar{Z}^{(1)}$ and $\bar{Z}^{(2)}$ of $\bar{Z}$ such that their support is separated by a distance~$L/2$ (concretely, $\bar{Z}^{(1)}$ is supported on a vertical strip, whereas $\bar{Z}^{(2)}$ is its translate in the horizontal direction). Since the encoder prepares a state which is approximately a ground state, we have that 
\begin{align*}
\langle \bar{Z}^{(1)}\bar{Z}^{(2)} \rangle_{\cE^{(t)}(\Phi)} \geq 1-\epsilon\ .
\end{align*} However, going to the Heisenberg picture and using locality, i.e.,  a spacially truncated evolution operator (with support on two disjoint regions~$B=B_1\cup B_2$) shows that this expectation value is approximately equal to the expectation value of a product operator~$ (\cE^{(t)})^\dagger(\bar{Z}^{(1)}\bar{Z}^{(2)})\approx (\cE_{B_1}^{(t)})^\dagger(\bar{Z}^{(1)})\otimes (\cE_{B_2}^{(t)})^\dagger(\bar{Z}^{(2)})$
 on the initial product state~$\Phi$. 
 The expectation of the product is greater than $1-\epsilon$ whereas the expectation of each of the operators individually is smaller than $1/\sqrt{2}$. This implies that the observables must be correlated in the initial state, contradicting the hypothesis that $\Phi$ is a product state.

\subsection{General proof of Theorem~\ref{thm:preparationtoporder} and consequences}
The previous argument was very specific to the $2D$ toric code. In particular, it relies on the fact that there are logical Pauli operators supported on a strip. Clearly, similar arguments apply to e.g., the $3D$ and $4D$ toric codes. More generally, we can extend the proof to codes satisfying the assumptions of Theorem~\ref{thm:preparationtoporder}. 
\begin{proof}
By assumption, there is a logical observable~$\logicalL$ which is  uncertain for~$\rho$. 
Furthermore, there are two POVMs $\cM^{(1)}=\{M_\alpha^{(1)}\}_\alpha$ and $\cM^{(2)}=\{M_\alpha^{(2)}\}_\alpha$ realizing~$\logicalL$. 
The assumptions of Theorem~\ref{thm:preparationtoporder} further state that the supports $B_j=\supp(\cM^{(j)})$ of the POVMs  $\cM^{(j)}$ for $j=1,2$ are separated by a distance \begin{align}
d(B_1,B_2)\geq c L \label{eq:distanceref}
\end{align}
 for some constant~$c$.   
Intuitively, the POVMs $\cM^{(1)}$ and $\cM^{(2)}$ constitute the counterpart of the logical observables~$\bar{Z}^{(1)}$ and $\bar{Z}^{(2)}$ used in the case of the toric code.
 
Consider  the measurement~$\cM=\cM^{(1)}\otimes\cM^{(2)}$ whose POVM elements are tensor products~$M^{(1)}_\alpha\otimes M^{(2)}_\beta$ with associated pairs $(\alpha,\beta)$ as outcomes. This measurement has support on~$B=B_1\cup B_2$. Furthermore, it yields perfectly correlated results  when measuring a ground state~$\rho$. In particular, the Shannon entropy of the measurement outcome~$\cM(\rho)$  is given by
\begin{align}
H(\cM(\rho))&=H(\cM^{(1)}(\rho))=H(\cM^{(2)}(\rho))\ .\label{eq:measuremententropydepend}
\end{align}
To characterize the outcome~$\cM(\cE^{(t)}(\Phi))$, we go to the Heisenberg picture according to
\begin{align*}
\tr[(M^{(1)}_\alpha\otimes M^{(2)}_\beta)\cE^{(t)}(\Phi)]&=\tr[\cE^{(t)\dagger} (M^{(1)}_\alpha\otimes M^{(2)}_\beta)\Phi].
\end{align*}
Because~$\cE^{(t)}$ can be localized to an $r$-neighborhood of~$B$, i.e., 
\begin{align}
\hspace{-2ex}\|(\cE^{(t)})^\dagger (M^{(1)}_\alpha\otimes M^{(2)}_\beta)-
(\cE^{(t)}_{B(r)})^\dagger (M^{(1)}_\alpha\otimes M^{(2)}_\beta)\|\nonumber\\
\qquad \leq |B|\cdot Ce^{vt-\gamma r}\label{eq:approximationerrorcEt}
\end{align}
we can approximately simulate the distribution over measurement outcomes~$\cM(\cE^{(t)}(\Phi))$ by considering a tensor product measurement
$\cM_{B_1(r)}^{(1)}\otimes\cM_{B_2(r)}^{(2)}$ applied to the initial state~$\Phi$ (here $\cM^{(j)}_{B_j(r)}=(\cE^{(t)}_{B_j(r)})^\dagger (\cM^{(j)})$).
Suppose the state $\cE^{(t)}(\Phi)$ is $\epsilon$-close to a ground state~$\rho$
for some $t \leq \frac{\gamma c L}{4 v}$, 
where  $\Phi$ is an initial product state (recall that
$(\gamma, v)$ are Lieb-Robinson-parameters, whereas $c$ is defined by~\eqref{eq:distanceref}).  Choosing $r= \frac{1}{2} c L$ and recalling that $|B| \in O(L^D)$ the approximation error~\eqref{eq:approximationerrorcEt} is an exponentially decaying function of~$L$.
 However, since $\Phi$ is a product state, the latter measurement gives uncorrelated measurement results,  implying 
\begin{align}
H(\cM(\cE^{(t)}(\Phi)))\approx \sum_{j=1,2}H(\cM^{(j)}_{B_j(r)}(\Phi))\ \label{eq:firstidentity}
\end{align}
(The precise meaning~$\approx$ is irrelevant here. It depends on continuity bounds for the von Neumann entropy.)

On the other hand, because $\cE^{(t)}(\Phi)$ and $\rho$ are $\epsilon$-close, we must have (by identical reasoning)
\begin{align*}
H(\cM(\rho))&\approx H(\cM(\cE^{(t)}(\Phi)))\quad\textrm{ and }\\
H(\cM^{(j)}(\rho))&\approx H(\cM_{B_j(r)}^{(j)}(\Phi))\qquad \textrm{ for each }j=1,2\ .
\end{align*}
Combining this with~\eqref{eq:measuremententropydepend} gives
\begin{align}
H(\cM(\cE^{(t)}(\Phi)))&\approx H(\cM^{(j)}_{B_j(r)}(\Phi))\ .\label{eq:secondidentity}
\end{align}
Taken together, Eq.~\eqref{eq:firstidentity} and~\eqref{eq:secondidentity} contradict the fact that the observable is uncertain. 
\end{proof}
Theorem~\ref{thm:preparationtoporder}  distills the underlying information-theoretic properties of logical operators. As an example of its application, consider the class of STS codes~\cite{Yoshida2011} in $D\leq 3$ dimensions. Two facts imply the existence of distant incarnations~$\cM^{(1)}, \cM^{(2)}$ of an uncertain logical observable~$\logicalL$ in this case: The first is  translational equivalence~\cite[Theorem 2]{Yoshida2011}. This states that any translation of a logical observable  with respect to a coarse grained lattice vector yields an equivalent logical observable.
The second ingredient is provided by  dimensional duality~\cite[Theorems 4 and 5]{Yoshida2011}: this implies implies  that in $D\leq 3$ dimensions, there exist canonical sets of logical operators $\{\ell_1, \ldots, \ell_k \}$ and $\{r_1, \ldots, r_k\}$ with canonical commutation relations and geometric dimensions adding pairwise to $D$.
Hence, for any encoded state $\rho$ there should be a constant amount of uncertainty associated to at least one observable from each pair (cf.~\eqref{eq:uncertaintyrelationgstoric}).

It remains to prove Corollary~\ref{cor:topologicalcodes}, which again boils down to identifying a suitable uncertain logical measurement.  Here we proceed similarly as in the case of the toric code, using a general uncertainty relation instead of~\eqref{eq:uncertaintyrelationgstoric}. 

Recall that the ground space of a system described by a TQFT on a torus  has two distinguished orthonormal bases~$\cB_1=\{\ket{i}_1\}_{i\in\cA}$ and $\cB_2=\{\ket{j}_2\}_{j\in\cA}$, where basis elements are indexed by anyon labels of the model. The bases~$\cB_1,\cB_2$ are related by the S-matrix, whose matrix elements, are, in the diagrammatic calculus of category theory, given by the Hopf link
\begin{align*}
S_{ij} = \frac{1}{\cD}\ \Smatrix{i}{j}\ .
\end{align*}
Here $\cD$ is the total quantum dimension. 

The index~$i$ (or `particle type') of the basis element~$\ket{i}_1$ can
be retrieved by measuring a logical observable ('string-operator') $\bar{L}_1$ 
which is supported on a strip~$C_1$ along a fixed topologically non-trivial cycle (or any of its translates).
 For states expressed in the second basis, there is a corresponding observable~$\bar{L}_2$ supported on a strip~$C_2$ along the  complementary non-trivial cycle.
\begin{proof}[Proof of Corollary~\ref{cor:topologicalcodes}]
The von Neumann measurements
corresponding to the eigenbases of $\bar{L}_1$ and $\bar{L}_2$ have realizations $\cM_{\bar{L}_1}$ and $\cM_{\bar{L}_2}$ supported on~$C_1$ and $C_2$, respectively.
The general uncertainty relation~\cite{MaassenUffink88} (corresponding to measuring in two complementary bases) tells us that for any state~$\rho$ supported on the ground space, we have 
\begin{align*}
H(\cM_{\bar{L}_1}(\rho))+H(\cM_{\bar{L}_2}(\rho))\geq -\log \max_{i,j} |S_{ij}|^2\ 
\end{align*}
In particular, this implies that there is an index $j\in \{1,2\}$ such that the measurement outcome~$\cM_{\bar{L}_j}(\bar{\rho})$ when measuring~$\rho$ is not deterministic: its Shannon entropy is lower bounded by a constant independent of the system size. 
 Using a $\cM^{(1)}\equiv\cM_{\bar{L}_j}$ and a measurement~$\cM^{(2)}$ obtained by translating $\cM^{(1)}$ by a distance~$L/2$ shows that the conditions of Theorem~\ref{thm:preparationtoporder} are satisfied. 
\end{proof}

In summary, our work establishes fundamental new limits on the preparation of topologically ordered states by possibly non-unitary  local processes. These limits are based on information-theoretic properties such as the underlying code distance and the structure of the logical operators. One may view these results in the general
context of classifying different phases: here the notion of local unitary circuits plays a crucial role in defining equivalence~\cite{ChenGuWen10}. By going beyond this restricted notion of equivalence, our work underscores the distinction
between topologically ordered and trivial phases. 

\subsection*{Acknowledgment}
RK gratefully acknowledges support  by NSERC  and thanks the Isaac Newton Institute for their hospitality.  FP acknowledges funding provided by
the Institute for Quantum Information and Matter, a
NSF Physics Frontiers Center with support of the Gordon
and Betty Moore Foundation (Grants No. PHY-0803371
and PHY-1125565).


\end{document}